\newcommand{\argmin}{\mathop{\mathrm{argmin}}}
\numberwithin{equation}{section}
\title{\bf Two Stochastic Control Methods for Mean-Variance Portfolio Selection of Jump Diffusions and Their Relationship
	\thanks{This work is financially supported by the National Key R\&D Program of China (2022YFA1006104), National Natural Science Foundations of China (12471419, 12271304), and Shandong Provincial Natural Science Foundations (ZR2024ZD35, ZR2022JQ01).}}
\author{\normalsize
	Qiyue Zhang\thanks{\it School of Mathematics, Shandong University, Jinan 250100, P.R. China, E-mail: qiyuezhang@mail.sdu.edu.cn},\quad
    Jingtao Shi\thanks{\it Corresponding author, School of Mathematics, Shandong University, Jinan 250100, P.R. China, E-mail: shijingtao@sdu.edu.cn}}
\date{}
\newtheorem{mypro}{Proposition}[section]
\newtheorem{mythm}{Theorem}[section]
\newtheorem{mydef}{Definition}[section]
\newtheorem{mylem}{Lemma}[section]
\begin{document}
	
\maketitle

\noindent{\bf Abstract:}\quad
	This paper is concerned with the maximum principle and dynamic programming principle for mean-variance portfolio selection of jump diffusions and their relationship. First, the optimal portfolio and efficient frontier of the problem are obtained using both methods. Furthermore, the relationship between these two methods is investigated. Specially, the connections between the adjoint processes and value function are given.
	
	\vspace{2mm}
	
\noindent{\bf Keywords:}\quad Mean-variance portfolio selection, jump diffusions, stochastic optimal control, maximum principle, dynamic programming principle,  efficient frontier
	
	\vspace{2mm}
	
\noindent{\bf Mathematics Subject Classification:}\quad 93E20, 60H10, 49N10
	
\section{Introduction}

Portfolio selection is to maximizing return and minimizing risk by constructing a portfolio of risk-free bonds and risky assets. Mean-variance model was proposed by Nobel Prize-winning economist Markowitz in 1952. After his pioneering work, many scholars contribute to the study of mean-variance problem. Mean-variance problem in fact is a two-objective stochastic optimal control problem. Among others, there are two common approaches to overcome the difficulty of multiple objectives. One method is to embed this problem into a stochastic {\it linear-quadratic} (LQ) problem, proposed by Zhou and Li \cite{ZL00}. Another one is to consider this problem as a stochastic optimal control problem of mean field type, proposed by Anderson and Djehiche \cite{AD11}.

For mean-variance portfolio selection of jump diffusions, Guo and Xu \cite{GX04,GX07} solve this problem by HJB equation and verification theorem; Framstad et al. \cite{FOS04} solve it by the {\it maximum principle} (MP); Shen and Siu \cite{SS13} solve it by the MP for a mean field model.

There are two common methods for solving stochastic control problems: Pontryagin's {\it maximum principle} and Bellman's {\it dynamic programming principle} (DPP). Many scholars have researched the relationship between them. Zhou \cite{Zhou90} researched the relationship between MP and DPP in stochastic control. Shi and Wu \cite{SW11} researched relationship between MP and DPP for stochastic control of jump diffusions.

Framstad et al. \cite{FOS04} used MP to study mean-variance portfolio selection of jump diffusions. However, they imposed a restriction on the mean value of the investor's wealth process, which does not reflect actual market conditions. This condition can be removed from this paper. In Guo and Xu \cite{GX04,GX07}, the price process of the stock is evolved as a compensated Poisson process. In a more general framework, the compensated Poisson process can be naturally generalized to a Poisson random measure. To the best of our knowledge, there are no results about the relationship between MP and DPP for mean-variance portfolio selection of jump diffusions. In this paper, we will research this topic.

The contribution of this paper can be summarized as follows.

(1) By embedding the problem into a stochastic optimal control problem, we can solve the mean-variance portfolio selection of jump diffusions directly, and find out the optimal control and efficient frontier. So the constraint on the mean value in \cite{FOS04} can be removed. Furthermore, we have derived the efficient frontier of the problem.

(2) In \cite{GX04}, there is a misusing of the generalized It\^{o}'s formula (see proof of Theorem 1 of their paper). We present the correct result in Lemma 4.1 of this paper. Moreover, we establish the DPP with Poisson random measure in the state equation, then we use the DPP to solve our problem and find out the optimal control.

(3) After solving the problem with MP and DPP respectively, we examine the relationship between them for mean-variance portfolio selection of jump diffusions, which has no previous research on this subject.

The remaining sections of this paper are organized as follows. In section 2, we state our problem and embed our original problem into a stochastic LQ problem of jump diffusions. In section 3, the stochastic LQ problem of jump diffusions is solved by the MP. The optimal control and efficient frontier of the problem are obtained. In section 4, the DPP is established for systems governed by {\it stochastic differential equations} (SDEs) with Poisson random measures. Building on this result, we proceed to characterize the optimal control of the problem. In section 5, we obtain the relationship between MP and DPP for the problem.

\section{Problem statement}

Let $T > 0$ be a finite time duration and let $(\Omega ,\mathcal{F},\mathbb{P})$ be a complete probability space, equipped with a one-dimensional standard Brownian motion $\left \{ B(t) \right \} _{0\le t \le T}$ and a Poisson random measure $N(\cdot , \cdot)$ independent of $B(\cdot)$ with the intensity measure $\hat{N} (dt,dz)=\lambda(dz)dt$. We write $\tilde{N} (dt,dz):= N(dt,dz)-\lambda(dz)dt$ for the compensated Poisson martingale measure.

Consider a market consisting of a risk-free bond and a risky asset. The price process $S_0(t)$ of the risk-free bond at time $t\in [0,T]$ is given by
\begin{equation*}
dS_0 (t)= \rho_tS_0 (t)dt,
\end{equation*}
where $\rho_\cdot> 0$, is a bounded, deterministic continuous function on $[0,T]$.
The price process $S_1 (t)$ of the risky asset at time $t\in [0,T]$ satisfies the following SDE of jump diffusions:
\begin{equation*}
dS_1 (t)= S_1 (t)\Big[\mu_tdt+\sigma_tdB(t)+\int_{\mathbb{R}\setminus\left\{ 0 \right\}} \eta(t,z)\tilde{N} (dt,dz)\Big],
\end{equation*}
where $\mu_\cdot$ and $\sigma_\cdot$ are deterministic continuous functions on $[0,T]$, $\eta(t,z):[0,T]\times \mathbb{R}\mapsto \mathbb{R}$ is a deterministic continuous function. What's more, we assume that $\mu_t> \rho_t$ for all $t\in[0,T]$.

We denote the wealth process of some investor as $X(\cdot)$. Suppose $\theta_0(t)$ is the number of the bond and $\theta_1(t)$ is the number of the risky asset held at time $t$, respectively. Then
\begin{equation*}
X (t)= \theta_0(t)S_0 (t)+\theta_1(t)S_1 (t),\quad t\in[0,T].
\end{equation*}
Assuming the portfolio selection satisfies the self-financing property, and $x>0$ represents the initial wealth. Then
\begin{equation*}
X (t)= x +\int_0^t \theta_0(s)dS_0 (s)+\int_0^t \theta_1(s)dS_1 (s).
\end{equation*}
We denote
\begin{equation*}
v (t)= \theta_1(t) S_1(t),\quad t\in[0,T],
\end{equation*}
as the amount of the wealth invested in the risky asset. So the wealth process $X(\cdot)$ can be described as
\begin{equation}\label{wealth}
dX (t)= \big[\rho_tX(t)+(\mu_t-\rho_t)v(t)\big]dt +\sigma_tv(t)dB(t) +v(t)\int_{\mathbb{R}\setminus\left\{ 0 \right\} } \eta(t,z)\tilde{N} (dt,dz).
\end{equation}

\begin{mydef}
The amount of the wealth invested in the risky asset $v(\cdot)$ is called admissible control if $v(\cdot)$ satisfies $\mathbb{E}\int_0^T v^2(t)dt< \infty$. The set of all admissible controls is denoted by $\mathcal{U}[0,T]$.
\end{mydef}

The mean-variance portfolio selection of jump diffusions, is to find an optimal control $\hat{v}(\cdot)$ to achieve the goal:
\begin{equation}\label{MF portfolio selection}
\left\{ J_1(\hat{v}(\cdot )),J_2(\hat{v}(\cdot )) \right\} = \min\limits_{v(\cdot)\in\,\mathcal{U}[0,T]} \left\{ J_1(v(\cdot )),J_2(v(\cdot )) \right\} \equiv \min\limits_{v(\cdot)\in\,\mathcal{U}[0,T]} \left\{ -\mathbb{E}X(T),\text{Var}X(T) \right \}.
\end{equation}

\begin{mydef}
An admissible control $\hat{v}(\cdot)$ is called efficient when there exists no admissible $v(\cdot)$ such that
\begin{equation}\label{efficient}
J_1(v(\cdot ))\le J_1(\hat{v}(\cdot)),\ J_2(v(\cdot))\le J_2(\hat{v}(\cdot)),
\end{equation}
and at least one of the inequalities holds strictly.
\end{mydef}

According to standard multi-objective optimization theory, multi-objective problems can be transformed into single-objective optimal control problems:
\begin{equation}\label{single objective}
\min\limits_{v(\cdot)\in\,\mathcal{U}[0,T]} \left\{ -\mathbb{E}X(T),\text{Var}X(T) \right\} \Leftrightarrow \min\limits_{v(\cdot)\in\,\mathcal{U}[0,T]} \left\{ -\mathbb{E}X(T)+\mu \text{Var}X(T) \right\},
\end{equation}
for some Lagrange multiplier $\mu>0$.
This problem remains difficult to deal with because of the $(\mathbb{E}[X(T)])^{2}$ term in the cost functional. Thanks to the work of \cite{ZL00}, this problem can be embedded into a stochastic LQ problem. We denote this problem by $P(\mu)$, and define
\begin{equation*}
\Pi_{P(\mu)} =\left\{ v\mid v(\cdot)\ \mbox{is an optimal control of } P(\mu) \right\}.
\end{equation*}
At the same time we consider the problem
\begin{equation}
\min\limits_{v(\cdot)\in\,\mathcal{U}[0,T]} \left\{ \mathbb{E}\left[\mu X^2(T)-\lambda X(T)\right] \right\} ,
\end{equation}
and define
\begin{equation*}
\Pi_{A(\mu,\lambda)} =\left\{ v\mid v(\cdot)\ \mbox{is an optimal control of } A(\mu,\lambda) \right\}  ,
\end{equation*}
where $-\infty< \lambda<+\infty$.
From the following result by Zhou and Li \cite{ZL00}, we know the relationship between problem $P(\mu)$ and $A(\mu,\lambda)$.
\begin{mypro}
{\bf (Embedding theorem)} For any $\mu>0$, we have
\begin{equation*}
 \Pi_{P(\mu)} \subseteq \Pi_{A(\mu,\lambda)},
\end{equation*}
Moreover, if $ \hat{v}(\cdot) \in \Pi_{P(\mu)}$, then $ \hat{v}(\cdot) \in  \Pi_{A(\mu,\lambda)}$ with $\hat{\lambda} = 1+2\mu \mathbb{E}\hat{X}(T)$, where $\hat{X}(\cdot)$ is the corresponding optimal wealth trajectory.
\end{mypro}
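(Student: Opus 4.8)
The plan is to exploit the elementary identity $\mathrm{Var}\,X(T)=\mathbb{E}[X^2(T)]-(\mathbb{E}X(T))^2$ to compare the two cost functionals on the common admissible set $\mathcal{U}[0,T]$. Writing
\[
J_{P(\mu)}(v(\cdot))=-\mathbb{E}X(T)+\mu\,\mathbb{E}[X^2(T)]-\mu\,(\mathbb{E}X(T))^2,\qquad
J_{A(\mu,\lambda)}(v(\cdot))=\mu\,\mathbb{E}[X^2(T)]-\lambda\,\mathbb{E}X(T),
\]
one sees that, for every admissible $v(\cdot)$,
\[
J_{P(\mu)}(v(\cdot))-J_{A(\mu,\lambda)}(v(\cdot))=(\lambda-1)\,\mathbb{E}X(T)-\mu\,(\mathbb{E}X(T))^2 ,
\]
so the difference depends on $v(\cdot)$ only through the scalar $m(v):=\mathbb{E}X^{v}(T)$. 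All these expectations are finite because the admissibility requirement $\mathbb{E}\int_0^T v^2(t)\,dt<\infty$, together with the boundedness of the coefficients and the standard $L^2$ moment estimate for SDEs driven by a Brownian motion and a Poisson random measure, yields $\mathbb{E}\sup_{0\le t\le T}|X(t)|^2<\infty$.

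Next I would fix an optimal control $\hat v(\cdot)\in\Pi_{P(\mu)}$, write $\hat m:=\mathbb{E}\hat X(T)$, and make the choice $\hat\lambda:=1+2\mu\hat m$ dictated by the statement. Then $\hat\lambda-1=2\mu\hat m$, and the scalar map $\phi(m):=(\hat\lambda-1)m-\mu m^2=2\mu\hat m\,m-\mu m^2$ is a strictly concave parabola with unique maximizer $m=\hat m$ and maximum value $\phi(\hat m)=\mu\hat m^2$. Consequently, for every admissible $v(\cdot)$,
\[
J_{A(\mu,\hat\lambda)}(v(\cdot))=J_{P(\mu)}(v(\cdot))-\phi(m(v))\ \ge\ J_{P(\mu)}(v(\cdot))-\mu\hat m^2\ \ge\ J_{P(\mu)}(\hat v(\cdot))-\mu\hat m^2 ,
\]
the last inequality being the optimality of $\hat v(\cdot)$ for $P(\mu)$. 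Since $m(\hat v)=\hat m$ we also have $J_{A(\mu,\hat\lambda)}(\hat v(\cdot))=J_{P(\mu)}(\hat v(\cdot))-\phi(\hat m)=J_{P(\mu)}(\hat v(\cdot))-\mu\hat m^2$. Combining the two displays gives $J_{A(\mu,\hat\lambda)}(v(\cdot))\ge J_{A(\mu,\hat\lambda)}(\hat v(\cdot))$ for all $v(\cdot)\in\mathcal{U}[0,T]$, i.e. $\hat v(\cdot)\in\Pi_{A(\mu,\hat\lambda)}$, which is exactly the claimed inclusion for the indicated value of $\lambda$.

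I do not expect a genuine analytic obstacle; the one point that has to be identified correctly is the \emph{tangency condition} forcing $\hat\lambda=1+2\mu\hat m$, namely that this is precisely the value of $\lambda$ for which the quadratic correction $\phi(\cdot)$ has its vertex at the optimal mean $\hat m$, which is what neutralizes the nonlinear term $(\mathbb{E}X(T))^2$ in $P(\mu)$. (Establishing a converse-type statement — that optimality for $A(\mu,\lambda)$ together with the consistency relation implies optimality for $P(\mu)$ — would require slightly more care, but it is not needed here.) It is worth noting that the upshot of this proposition is that $P(\mu)$ is reduced to the genuinely linear-quadratic problem $A(\mu,\hat\lambda)$, which is the form treated by the maximum principle and the dynamic programming principle in the subsequent sections.
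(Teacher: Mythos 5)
Your proof is correct and is exactly the standard Zhou--Li argument that the paper itself omits (it simply cites \cite{ZL00}): the identity $J_{A(\mu,\lambda)}(v)=J_{P(\mu)}(v)-\big[(\lambda-1)\mathbb{E}X(T)-\mu(\mathbb{E}X(T))^2\big]$ together with the choice $\hat\lambda=1+2\mu\mathbb{E}\hat X(T)$, which places the vertex of the concave quadratic correction at the optimal mean, is precisely the elementary inequality $\mu a^2\ge 2\mu a\hat m-\mu\hat m^2$ used in the original reference. No gaps; the integrability remark ensuring all expectations are finite is also appropriate.
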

We need to point out that the embedding theorem does not depend on the form of state equation \eqref{wealth}.

To simplify the problem further, we apply the following variable substitution:
\begin{equation}\label{x and y}
\beta= \frac{\lambda }{2\mu} ,\ y(t)=\sqrt{\mu} (X(t)-\beta),\ u(t)=\sqrt{\mu}v(t).
\end{equation}
Then \eqref{MF portfolio selection} becomes
\begin{equation}\label{cost functional}
\min\limits_{u(\cdot)\in\,\mathcal{U}[0,T]} \left \{ \mathbb{E}\Big[\frac{1}{2}y^{2}(T) \Big] \right \} ,
\end{equation}
and \eqref{wealth} becomes
\begin{equation}\label{state SDE}
\left\{
\begin{aligned}
dy(t)&=\left[y(t)\rho_t +u(t)(\mu_t-\rho_t)+\sqrt{\mu} \beta \rho_t\right]dt+ u(t)\sigma_tdB(t)+u(t)\int_{\mathbb{R}\setminus\left\{ 0 \right\} }\eta(t,z)\tilde{N}(dt,dz),\\
y(0)&=y\equiv \sqrt{\mu}(x-\beta).
\end{aligned}
\right.
\end{equation}
Then we can solve this problem by MP and DPP approaches, respectively.

Before the end of this section, we introduce the following {\it generalized It\^{o}'s formula} (see, for example, \O ksendal and Sulem \cite{OS05}). Let $C_p^{1,2}([0,T]\times\mathbb{R})$ denote the set of functions $v:[0,T]\times\mathbb{R}\rightarrow\mathbb{R}$, satisfying $v(t,\cdot)$ is continuously differentiable on $[0,T]$, $v(\cdot,y )$ is second-order continuously differentiable on $\mathbb{R}$, and for constant $c>0$ and $k=1,2,\cdots$, $v(t,y)\le c(1+\left | y \right | ^k)$.
\begin{mylem}\label{generalized Ito formula}
{\bf (Generalized It\^{o}'s formula)} Suppose $dx(t)=b(t)dt+\sigma(t)dB(t)+\int_{\mathbb{R}\setminus\left\{ 0 \right\} }c(t,z)\\\tilde{N}(dt,dz)$, $\omega(\cdot,\cdot)\in C_p^{1,2}([0,T]\times\mathbb{R})$, then we have:
\begin{align}
d\omega(t,x(t))&=\left[\frac{\partial \omega}{\partial t} (t,x(t))+\frac{\partial \omega}{\partial x} (t,x(t))b(t)+\frac{1}{2} \frac{\partial^2 \omega}{\partial x^{2}}(t,x(t))\sigma^{2}(t)\right]dt\nonumber\\
&\quad +\frac{\partial \omega}{\partial x} (t,x(t))\sigma(t)dB(t)+\int_{\mathbb{R}\setminus\left\{ 0 \right\} }\big[\omega\big(t,x(t-)+c(t,z)\big)-\omega(t,x(t-))\big]\tilde{N}(dt,dz)\nonumber\\
&\quad +\int_{\mathbb{R}\setminus\left \{ 0 \right \} }\big[\omega\big(t,x(t)+c(t,z))-\omega(t,x(t)\big)-c(t,z)\frac{\partial \omega}{\partial x}(t,x(t)) \big]\lambda(dz)dt.\nonumber
\end{align}
\end{mylem}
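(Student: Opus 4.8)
The plan is to deduce the formula from the general It\^o formula for special semimartingales and then to rearrange the jump terms so that the first-order correction is absorbed into the compensator. First I would regard $x(\cdot)$ as the special semimartingale $x(t)=x(0)+\int_0^t b(s)\,ds+\int_0^t\sigma(s)\,dB(s)+\int_0^t\int_{\mathbb{R}\setminus\{0\}}c(s,z)\,\tilde{N}(ds,dz)$, whose continuous local martingale part $x^c$ satisfies $d[x^c]_t=\sigma^2(t)\,dt$ and whose jumps are $\Delta x(\tau)=c(\tau,z)$ at the jump times $\tau$ of $N(\cdot,\cdot)$ with associated marks $z$. For $\omega\in C_p^{1,2}([0,T]\times\mathbb{R})$ the classical It\^o--Meyer formula then gives
\[
\begin{aligned}
\omega(t,x(t))-\omega(0,x(0))&=\int_0^t\frac{\partial\omega}{\partial t}(s,x(s))\,ds+\int_0^t\frac{\partial\omega}{\partial x}(s,x(s-))\,dx(s)\\
&\quad+\frac12\int_0^t\frac{\partial^2\omega}{\partial x^2}(s,x(s))\sigma^2(s)\,ds\\
&\quad+\sum_{0<s\le t}\Big[\omega(s,x(s))-\omega(s,x(s-))-\frac{\partial\omega}{\partial x}(s,x(s-))\Delta x(s)\Big].
\end{aligned}
\]

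Next I would expand $\int_0^t\frac{\partial\omega}{\partial x}(s,x(s-))\,dx(s)$ into its drift, Brownian and $\tilde{N}$ parts and rewrite the jump sum as
\[
\int_0^t\int_{\mathbb{R}\setminus\{0\}}\Big[\omega(s,x(s-)+c(s,z))-\omega(s,x(s-))-\frac{\partial\omega}{\partial x}(s,x(s-))c(s,z)\Big]N(ds,dz).
\]
Splitting $N(ds,dz)=\tilde{N}(ds,dz)+\lambda(dz)\,ds$ in this last integral yields a $\tilde{N}$-integral with integrand $-\frac{\partial\omega}{\partial x}(s,x(s-))c(s,z)$ that exactly cancels the $\tilde{N}$-part of $\int_0^t\frac{\partial\omega}{\partial x}(s,x(s-))\,dx(s)$, together with a $\lambda(dz)\,ds$-integral which, after replacing $x(s-)$ by $x(s)$ ($ds$-a.e.\ admissible because $x(\cdot)$ has at most countably many jumps), is exactly the compensator term in the statement. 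Collecting the surviving pieces gives the asserted identity.

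The real work is to make this rigorous: one must check that every integral in the statement is well-defined and that the cancellation above is legitimate, i.e.\ that the recombined pieces are separately (local) martingales or finite-variation processes. The key estimates are the second-order Taylor bound $\big|\omega(s,y+c)-\omega(s,y)-c\,\frac{\partial\omega}{\partial x}(s,y)\big|\le\frac12\big(\sup_{|\xi|\le R}\big|\frac{\partial^2\omega}{\partial x^2}(s,\xi)\big|\big)c^2$ valid for $|y|,|y+c|\le R$, which together with $\int_{\mathbb{R}\setminus\{0\}}c(s,z)^2\lambda(dz)<\infty$ (here $c(s,z)=u(s)\eta(s,z)$) makes the compensator integrand absolutely $\lambda$-integrable near $z=0$; the polynomial growth of $\omega$ and its derivatives, which with $\mathbb{E}\int_0^T u^2(s)\,ds<\infty$ and standard moment bounds on $y(\cdot)$ handles the $ds$- and $dB$-terms and the $L^2$-convergence of the $\tilde{N}$-integral; and a localization by $\tau_n=\inf\{t:|x(t)|\ge n\}\wedge T$ to pass from the local to the global statement. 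When $N$ has infinite activity I would first establish the identity for the compound-Poisson approximations obtained by discarding the jumps with $|z|<1/n$ and then let $n\to\infty$ using these uniform bounds; for finite activity one may instead argue directly by induction on the successive jump times, applying the ordinary It\^o formula on the intervals of continuity and adding the jump increments.

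The main obstacle is precisely this integrability bookkeeping near $z=0$: one has to ensure that $\int_{\mathbb{R}\setminus\{0\}}\big[\omega(s,x(s)+c(s,z))-\omega(s,x(s))-c(s,z)\frac{\partial\omega}{\partial x}(s,x(s))\big]\lambda(dz)$ converges absolutely and that the small-jump $\tilde{N}$-integral converges in $L^2$, which is exactly what forces the correction $-c\,\frac{\partial\omega}{\partial x}$ to sit inside the compensator rather than under $\tilde{N}$ --- the very subtlety behind the misuse of the generalized It\^o formula in \cite{GX04} recalled in the introduction. The remaining steps are routine.
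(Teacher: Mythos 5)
Your proposal is essentially correct, but note that the paper does not prove Lemma 2.1 at all: it is quoted as a known result with a pointer to \O ksendal and Sulem \cite{OS05}, so there is no ``paper proof'' to match. What you supply is the standard self-contained derivation that the citation hides: start from the It\^o--Meyer formula for the special semimartingale $x(\cdot)$, rewrite the jump sum as an integral against $N(ds,dz)$, split $N=\tilde N+\lambda\,ds$, and observe that the first-order term $-c\,\frac{\partial\omega}{\partial x}$ cancels against the $\tilde N$-part of $\int\frac{\partial\omega}{\partial x}(s,x(s-))\,dx(s)$, leaving exactly the three terms of the statement; replacing $x(s-)$ by $x(s)$ in the $ds$-integrals is indeed harmless since the jump times are countable. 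Your integrability bookkeeping is the right one: the second-order Taylor bound makes the compensator integrand $O(c^2)$ near $z=0$, hence $\lambda$-integrable given $\int c^2(s,z)\lambda(dz)<\infty$, and localization via $\tau_n=\inf\{t:|x(t)|\ge n\}\wedge T$ is enough because the lemma only asserts the pathwise differential identity (so you do not even need the global moment estimates on $y(\cdot)$ that you invoke; those matter later, when the formula is integrated and expectations are taken, e.g.\ in Theorem 4.1 and in \eqref{dEX^2(t)}). One small caution: the paper's definition of $C^{1,2}_p$ imposes polynomial growth only on $\omega$ itself, not on its derivatives, so your Taylor estimate should be stated, as you essentially do, with $\sup_{|\xi|\le R}|\partial_{xx}\omega(s,\xi)|$ over compacts supplied by the localization rather than with any global growth of $\partial_{xx}\omega$. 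With that reading, your argument is a correct and more informative substitute for the paper's citation, and it also pinpoints the placement of the correction term inside the compensator, which is precisely the issue with \cite{GX04} that the authors mention.
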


\section{Solving the problem by MP}

Now we use the MP approach in \cite{FOS04} to solve our problem. According to the state equation \eqref{state SDE} and the cost functional in \eqref{cost functional}, we can write the Hamiltonian function as:
\begin{equation}
H(t,y,u,p,q,r)=[y\rho_t +u(\mu_t-\rho_t)+\sqrt{\mu} \beta \rho_t]p+u\sigma_tq+u\int_{\mathbb{R}\setminus\left\{ 0 \right\} }\eta(t,z) r(t,z)\lambda(dz),
\end{equation}
and the adjoint equation is:
\begin{equation}\label{adjoint}
\begin{cases}
dp(t)=-\rho_tp(t)dt +q(t)dB(t)+\int_{\mathbb{R}\setminus\left\{ 0 \right\} }r(t,z)\tilde{N}(dt,dz), \\
p(T)=-y(T).
\end{cases}
\end{equation}
Suppose $p(\cdot)$ has the form
\begin{equation}\label{relation of p, y}
p(\cdot)=\phi(\cdot)y(\cdot)+\psi(\cdot),
\end{equation}
for some differential functions $\phi(\cdot), \psi(\cdot)$, with $\phi(T)=-1, \psi(T)=0$.
Applying generalized It\^{o}'s formula (Lemma 2.1) to \eqref{relation of p, y}, we get
\begin{align}\label{dp}
dp(t)&=\left[\phi_ty(t)\rho_t +\phi_tu(t)(\mu_t-\rho_t)+\phi_t\sqrt{\mu} \beta \rho_t+y(t)\phi_t'+\psi_t'\right]dt\nonumber\\
&\quad + u(t)\sigma_{t}dB(t)+u(t)\int_{\mathbb{R}\setminus\left\{ 0 \right\} }\eta(t,z)\tilde{N}(dt,dz).
\end{align}
The following expressions are obtained by comparing the backward equation in \eqref{adjoint} and \eqref{dp}:
\begin{equation}\label{compare}
\begin{cases}
  \phi_ty(t)\rho_t +\phi_tu(t)(\mu_t-\rho_t)+\phi_t\sqrt{\mu} \beta \rho_t+y(t)\phi_t'+\psi_t'=-\rho_t(\phi_ty(t)+\psi_t),\\
  q(t)=\phi_t\sigma_tu(t),\\
  r(t,z)=\phi_tu(t)\eta(t,z),\quad \text{a.e.}t\in[0,T],\quad \mathbb{P}\text{-a.s.}
\end{cases}
\end{equation}
Substituting the above $q(\cdot)$, $r(\cdot,z)$ back into the Hamiltonian, we have
\begin{align}
&H(t,\hat{y}(t),\hat{u}(t), p(t),q(t),r(t))\nonumber\\
&=\rho_t\hat{y}(t)p(t)+\sqrt{\mu}\beta\rho_tp(t)+\hat{u}(t)\bigg[(\mu_{t}-\rho_{t})p(t)+\sigma_{t}q(t)+\int_{\mathbb{R}\setminus\left\{ 0 \right\} }\eta(t,z)r(t,z)\lambda(dz)\bigg].
\end{align}
The partial derivative of $H$ with respect to $u$ at $\hat{u}(\cdot)$ is equal to 0 achieves:
\begin{equation}\label{partial derivative of H}
(\mu_t-\rho_t)p(t)+\sigma_tq(t)+\int_{\mathbb{R}\setminus\left\{ 0 \right\} }\eta(t,z)r(t,z)\lambda(dz)=0,\quad \text{a.e.}t\in[0,T],\quad \mathbb{P}\text{-a.s.}
\end{equation}
Substituting \eqref{relation of p, y} and $q(\cdot), r(\cdot,\cdot)$ in \eqref{compare} into \eqref{partial derivative of H}, we have
\begin{equation}\label{optimal control of MP}
\hat{u}(t)=\frac{(\rho_t-\mu_t)(\phi_t\hat{y}(t)+\psi_t)}{\phi_t\Lambda_t},\quad \text{a.e.}t\in[0,T],\quad \mathbb{P}\text{-a.s.},
\end{equation}
where
\begin{equation}
\Lambda_{t}: =\sigma^{2}_{t}+\int_{\mathbb{R}\setminus\left \{ 0 \right \} }\eta^{2}(t,z)\lambda(dz).\nonumber
\end{equation}
What's more, from the first equality in \eqref{compare}, we can find another expression for $\hat{u}(\cdot)$:
\begin{equation}\label{optimal control of MP-another expression}
\hat{u}(t)=\frac{(\phi_t\rho_t+\phi'_t)\hat{y}(t)+\rho_t(\phi_t\hat{y}(t) +\psi_t)+\psi'_t+\sqrt{\mu}\beta\phi_t\rho_t}{\phi_t(\rho_t-\mu_t)},\quad \text{a.e.}t\in[0,T],\quad \mathbb{P}\text{-a.s.}
\end{equation}
Comparing \eqref{optimal control of MP} and \eqref{optimal control of MP-another expression}, we get:
\begin{equation*}
(\rho_t -\mu_t)^2(\phi_t\hat{y}(t)+\psi_t)=[(\phi_t\rho_t+\phi'_t)\hat{y}(t)+\rho_t(\phi_t\hat{y}(t)+\psi_t+\sqrt{\mu}\beta\phi_t)+\psi'_t]\Lambda_t,\quad t\in[0,T].
\end{equation*}
Comparing the coefficients of $\hat{y}(t)$ and the constant terms, we obtain two {\it ordinary differential equations} (ODEs):
\begin{equation}\label{phi}
\begin{cases}
(\rho_t-\mu_t)^2\phi_t-[2\rho_t\phi_t+\phi'_t]\Lambda_t=0,\\
\phi(T)=-1,
\end{cases}
\end{equation}
\begin{equation}\label{psi}
\begin{cases}
(\rho_t-\mu_t)^2\psi_t-[\rho_t\psi_t+\psi'_t+\phi_t\sqrt{\mu}\beta\rho_t]\Lambda_t=0,\\
\psi(T)=0.
\end{cases}
\end{equation}
Moreover if we define
\begin{equation*}
\theta(t):=\frac{(\mu_t-\rho_t)^2}{\Lambda_t},\quad t\in[0,T],
\end{equation*}
then \eqref{phi} and \eqref{psi} become:
\begin{equation}
\begin{cases}
\phi'(t)+(\theta(t)-2\rho_t)\phi(t) =0,\\
\phi(T)=1,
\end{cases}
\end{equation}
\begin{equation}
\begin{cases}
\psi'(t)-(\theta(t)-\rho_t)\psi(t)+\sqrt{\mu}\beta\rho_t\phi(t)=0,\\
\psi(T)=0,
\end{cases}
\end{equation}
which have explicit solutions:
\begin{equation}\label{phi solution}
\phi_t=-e^{\int_t^T(\theta(s)-2\rho_s)ds},\quad t\in[0,T].
\end{equation}
\begin{equation}\label{psi solution}
\psi_t=-\sqrt{\mu}\beta e^{-\int_t^T (\rho_s-\theta(s))ds}\left(e^{-\int_t^T \rho_sds}-1\right),\quad t\in[0,T].
\end{equation}
Substituting \eqref{phi solution} and \eqref{psi solution} into \eqref{optimal control of MP}, we get the following state/wealth feedback form:
\begin{equation}\label{optimal control of MP-state}
\hat{u}(t)=-\left[\hat{y}(t)+\sqrt{\mu}\beta\left(1-e^{-\int_t^T \rho_sds}\right)\right]\frac{\mu_t-\rho_t}{\Lambda_t},\quad \text{a.e.}t\in[0,T],\quad \mathbb{P}\text{-a.s.}
\end{equation}
We summarize the above analysis into the following theorem.
\begin{mythm}
By the MP approach, the state/wealth feedback form's optimal control $\hat{u}(\cdot)$ for our mean-variance portfolio selection of jump diffusions problem \eqref{MF portfolio selection} is given by \eqref{optimal control of MP-state}.
\end{mythm}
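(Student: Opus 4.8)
The plan is to run the stochastic maximum principle for the controlled jump diffusion \eqref{state SDE} with terminal cost \eqref{cost functional}, following \cite{FOS04}, and then to exploit the linear-quadratic structure to collapse the optimality system to two scalar ODEs. First I would record the Hamiltonian $H$ and the first-order adjoint BSDE \eqref{adjoint}, whose terminal condition $p(T)=-\hat y(T)$ is the $y$-derivative of the terminal cost $-\frac{1}{2} y^2$. Because $H$ is affine in $u$, the necessary condition for optimality is that $\partial H/\partial u$ vanish along an optimal tuple $(\hat y,\hat u;p,q,r)$, i.e.\ \eqref{partial derivative of H}; this already expresses $\hat u$ implicitly through $(p,q,r)$.

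The crux is the affine ansatz $p(t)=\phi(t)\hat y(t)+\psi(t)$ of \eqref{relation of p, y}, with deterministic $\phi,\psi$ satisfying $\phi(T)=-1$, $\psi(T)=0$. Applying the generalized It\^{o} formula (Lemma \ref{generalized Ito formula}) to $t\mapsto\phi(t)\hat y(t)+\psi(t)$ --- the nonlinear jump-compensator correction term vanishes since the map is affine in the state --- and matching the $dB(t)$, $\tilde N(dt,dz)$ and $dt$ coefficients against \eqref{adjoint} yields $q(t)=\phi_t\sigma_t\hat u(t)$, $r(t,z)=\phi_t\eta(t,z)\hat u(t)$, together with the drift identity in \eqref{compare}. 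Feeding $q,r$ into \eqref{partial derivative of H} gives one feedback expression for $\hat u$, namely \eqref{optimal control of MP}, while the drift identity of \eqref{compare} gives a second expression \eqref{optimal control of MP-another expression}; equating the two and separately collecting the coefficient of $\hat y(t)$ and the constant term produces the decoupled linear ODEs \eqref{phi}--\eqref{psi} for $\phi$ and $\psi$. Since $\rho,\mu,\sigma,\eta$ are continuous, these are well posed on $[0,T]$ and integrate explicitly to \eqref{phi solution}--\eqref{psi solution}; substituting back into \eqref{optimal control of MP} and simplifying gives the wealth-feedback law \eqref{optimal control of MP-state}.

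The computations are routine, so the real work is in making the chain of reasoning airtight. One should observe that the coefficient matching in \eqref{compare} is over-determined, so the existence of a consistent ansatz is exactly what forces the ODEs \eqref{phi}--\eqref{psi}; that the standing requirement $\Lambda_t>0$ on $[0,T]$ is what makes \eqref{optimal control of MP} and the feedback law meaningful; and that the candidate control \eqref{optimal control of MP-state} is admissible, which I would check by inserting it into \eqref{state SDE}, noting that the closed loop is linear with bounded deterministic coefficients, and invoking standard existence, uniqueness and moment bounds to conclude $\mathbb{E}\int_0^T\hat u^2(t)\,dt<\infty$ (this also certifies that $(p,q,r)$ is a bona fide solution of \eqref{adjoint}). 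The delicate point is sufficiency: the stationarity condition \eqref{partial derivative of H} need not by itself imply optimality, but since $H$ is affine in $(y,u)$ and the terminal cost $y\mapsto\frac{1}{2} y^2$ is convex, the sufficient maximum principle for jump diffusions applies and confirms that $(\hat y,\hat u)$ is indeed optimal, which completes the proof.
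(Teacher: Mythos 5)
Your proposal follows essentially the same route as the paper: the affine ansatz $p(t)=\phi_t\hat y(t)+\psi_t$, coefficient matching via the generalized It\^{o} formula to get $q$, $r$ and the drift identity, equating the two resulting expressions for $\hat u$ to obtain the ODEs \eqref{phi}--\eqref{psi}, and substituting their explicit solutions into \eqref{optimal control of MP}. Your added checks --- admissibility of the closed-loop control, positivity of $\Lambda_t$, and sufficiency via the convexity hypotheses of the sufficient maximum principle of \cite{FOS04} --- are points the paper leaves implicit, and they tighten the argument without changing its substance.
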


Efficient frontier is the optimal portfolio of risky assets that the market can offer. So it is very important for our mean-variance portfolio selection of jump diffusions problem. Now we try to find out the efficient frontier for our problem.  First by \eqref{x and y} and \eqref{optimal control of MP-state}, we have
\begin{equation}\label{optimal control of MP-state x}
\hat{v}(t)=\beta\left[e^{-\int_t^T\rho_sds }-\hat{X}(t)\right]\frac{\mu_t-\rho_t}{\Lambda_t},\quad \text{a.e.}t\in[0,T],\quad \mathbb{P}\text{-a.s.}
\end{equation}
Substituting \eqref{optimal control of MP-state x} into \eqref{wealth}, we have
\begin{equation}\label{optimal wealth}
\left\{
\begin{aligned}
d\hat{X}(t)&=\left[\hat{X}(t)\rho_t+\beta e^{-\int_t^T\rho_sds }\theta(t)-\hat{X}(t)\theta(t)\right]dt\\
     &\quad +\left[\beta e^{-\int_t^T\rho_sds}-\hat{X}(t)\right]\frac{(\mu_t-\rho_t)\sigma_t}{\Lambda_t}dB(t)\\
     &\quad +\int_{\mathbb{R}\setminus\left\{ 0 \right\} }\left[\beta e^{-\int_t^T\rho_sds}-\hat{X}(t)\right]\frac{(\mu_t-\rho_t)\eta(t,z)}{\Lambda_t}\tilde{N}(dt,dz) ,\\
 \hat{X}(0)&=x>0.
\end{aligned}
\right.
\end{equation}
Taking expectations to both sides, we get
\begin{equation}\label{expectation of X(t)}
\begin{cases}
d\mathbb{E}\hat{X}(t)=\left[(\rho_t-\theta(t))\mathbb{E}\hat{X}(t)+\beta e^{-\int_t^T\rho_sds}\theta(t)\right]dt ,\\
\hat{X}(0)=x.
\end{cases}
\end{equation}
The solution of \eqref{expectation of X(t)} is:
\begin{equation}\label{X(t) solution}
\mathbb{E}\hat{X}(t)=xe^{\int_0^t(\rho_s-\theta(s))ds}+\beta e^{-\int_t^T\rho_sds}\left(1-e^{-\int_0^t\theta(s)ds}\right).
\end{equation}
Applying Lemma 2.1 to $\hat{X}^2(\cdot)$ and take expectations, we get
\begin{equation}\label{dEX^2(t)}
\begin{cases}
d\mathbb{E}\hat{X}^2(t)=\left[(2\rho_{t}-\theta(t))\mathbb{E}\hat{X}^2(t)+\beta^{2}e^{-\int_{t}^{T}\rho_{s}ds }\theta(t)\right]dt,\\
\mathbb{E}\hat{X}^2(0)=x^2.
\end{cases}
\end{equation}
The solution of \eqref{dEX^2(t)} is
\begin{equation}\label{EX^2(t) solution}
\mathbb{E}\hat{X}^2(t)=x^2e^{\int_0^T(2\rho_s-\theta(s))ds}+\beta^2e^{-2\int_t^T\rho_sds}\left(1-e^{-\int_0^T\theta(s)ds}\right).
\end{equation}
In order to find out the relationship between $\mathbb{E}\hat{X}(T)$ and $\text{Var}\hat{X}(T)$, from \eqref{X(t) solution}, we get
\begin{equation}\label{expectation of X(t)-another form}
\beta\left(1-e^{-\int_0^T\theta(t)dt}\right)=\mathbb{E}\hat{X}(T)-xe^{\int_0^T(\rho_t-\theta(t))dt}.
\end{equation}
From \eqref{X(t) solution}, \eqref{EX^2(t) solution} and \eqref{expectation of X(t)-another form}, we get
\begin{align}\label{efficient frontier}
\text{Var}X(T)&=\mathbb{E}\hat{X}^2(T)-\big(\mathbb{E}\hat{X}(T)\big)^2 \nonumber\\
&=\frac{1}{e^{\int_0^T\theta(t)dt}-1} \left(\mathbb{E}\hat{X}(T)-xe^{\int_0^T \rho_tdt}\right)^2.
\end{align}
We summarize the above analysis into the following result.
\begin{mythm}
The efficient frontier of our mean-variance portfolio selection of jump diffusions problem \eqref{MF portfolio selection} is given by \eqref{efficient frontier}.
\end{mythm}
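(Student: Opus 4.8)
The plan is to build the efficient frontier from the explicit first and second moments of the optimal closed-loop wealth. First I would substitute the state-feedback optimal control \eqref{optimal control of MP-state x} into the wealth dynamics \eqref{wealth}, obtaining the closed-loop SDE \eqref{optimal wealth}; note that its drift, diffusion and jump coefficients are all affine in $\hat X(t)$, the only free parameter being $\beta=\lambda/(2\mu)$.

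Next I would compute the two moments. Taking expectations in \eqref{optimal wealth} removes the Brownian and compensated-Poisson martingale terms and leaves the linear ODE \eqref{expectation of X(t)} for $\mathbb{E}\hat X(t)$, solved explicitly by \eqref{X(t) solution}. For the second moment I would apply the generalized It\^{o} formula (Lemma \ref{generalized Ito formula}) to $\hat X^2(\cdot)$. The point requiring care is the jump part: the It\^{o} compensator contributes $\int_{\mathbb{R}\setminus\{0\}}\big[(\hat X+c)^2-\hat X^2-2\hat X c\big]\lambda(dz)=\int_{\mathbb{R}\setminus\{0\}}c(t,z)^2\lambda(dz)$, and this, combined with the Brownian second-order term, reconstitutes the coefficient $\Lambda_t=\sigma_t^2+\int_{\mathbb{R}\setminus\{0\}}\eta^2(t,z)\lambda(dz)$, hence $\theta(t)$; taking expectations then yields the linear ODE \eqref{dEX^2(t)}, with solution \eqref{EX^2(t) solution}. (This careful treatment of the compensator is precisely the correction over \cite{GX04} flagged in the introduction.)

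Finally I would eliminate $\beta$. Since any optimal control of $P(\mu)$ satisfies $\hat\lambda=1+2\mu\mathbb{E}\hat X(T)$ by the embedding theorem (Proposition 2.1), $\beta$ is pinned down by $\mathbb{E}\hat X(T)$; evaluating \eqref{X(t) solution} at $t=T$ expresses $\beta\big(1-e^{-\int_0^T\theta(t)dt}\big)$ through $\mathbb{E}\hat X(T)$ and $x$, which is \eqref{expectation of X(t)-another form}. Substituting this relation together with \eqref{X(t) solution} and \eqref{EX^2(t) solution} at $t=T$ into $\text{Var}\hat X(T)=\mathbb{E}\hat X^2(T)-\big(\mathbb{E}\hat X(T)\big)^2$ and simplifying produces \eqref{efficient frontier}.

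As everything reduces to linear ODEs, the genuine work is the bookkeeping: the final relation must come out free of $\mu$ and $\lambda$ (it does, since $\theta$, $\rho$ and $x$ do not involve them), and one should record that $\theta(t)=(\mu_t-\rho_t)^2/\Lambda_t>0$ under the standing hypothesis $\mu_t>\rho_t$, so $e^{\int_0^T\theta(t)dt}-1>0$ and \eqref{efficient frontier} is a genuine upward parabola in $\mathbb{E}\hat X(T)$. I would also note that, as the Lagrange multiplier varies over $(0,\infty)$, $\mathbb{E}\hat X(T)$ sweeps the half-line $\mathbb{E}\hat X(T)\ge xe^{\int_0^T\rho_t dt}$, so that \eqref{efficient frontier} is genuinely the efficient frontier and not merely a curve containing the efficient points.
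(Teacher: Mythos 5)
Your proposal follows essentially the same route as the paper: substitute the feedback control \eqref{optimal control of MP-state x} into \eqref{wealth} to get the closed-loop dynamics \eqref{optimal wealth}, derive the linear ODEs \eqref{expectation of X(t)} and \eqref{dEX^2(t)} for the first and second moments (the latter via the generalized It\^{o} formula of Lemma \ref{generalized Ito formula}), and eliminate $\beta$ using \eqref{expectation of X(t)-another form} to arrive at \eqref{efficient frontier}. Your added observations --- the careful handling of the jump compensator reconstituting $\Lambda_t$, the positivity of $e^{\int_0^T\theta(t)dt}-1$ under $\mu_t>\rho_t$, and the remark that the curve is actually swept out as the multiplier varies --- are correct and slightly more thorough than the paper's exposition.
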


\section{Solving the problem by DPP}

In this section, we first establish the DPP and corresponding HJB equation of jump diffusions, then we use Theorem 4.1 to solve our problem in section 2. Though the result can be referred to Chapter 5 in \O ksendal and Sulem \cite{OS05}, we present here for the readers' convenience. The controlled state equation we considered is the following linear SDE of jump diffusions:
\begin{equation}\label{linear SDE of jump diffusions}
\begin{cases}
dy(t)=[A_ty(t)+B_tu(t)+C_t]dt+u(t)\sigma_{t}dB(t)+u(t)\int_{\mathbb{R}\setminus\left\{ 0 \right\} }\eta(t,z)\tilde{N}(dt,dz),
\\ y(0)=y,
\end{cases}
\end{equation}
where $A_\cdot,\ B_\cdot,\ C_\cdot,\ \sigma_\cdot$ are deterministic continuous function on $[0,T]$, $\eta(t,z):[0,T]\times \mathbb{R}\mapsto \mathbb{R} $ is a deterministic continuous function and $\int_0^T C^2_tdt< \infty$.
And we want to minimize the cost functional:
\begin{equation}\label{polynomial cost functional}
\mathbb{E}\left\{ \int_0^TL(s,y(s),u(s))+\Psi(T,y(T)) \right\},
\end{equation}
where for some constants $c>0$ and $k=1,2,\cdots$, uniformly in $t$,
\begin{equation*}
L(t,y,u)\le c(1+|y|^k+|u|^k),\quad \Psi(y)\le c(1+|y|^k).
\end{equation*}

We denote
\begin{equation}\label{polynomial cost functional-conditional}
J(t,y;u(\cdot)):=\mathbb{E}_{t,y}\left\{ \int_t^TL(s,y(s),u(s))+\Psi(y(T)) \right\},
\end{equation}
where $\mathbb{E}_{t,y}$ means the conditional expectation is taken when the initial condition is $y(t)=y$, $0\le t<T$. And we define the value function
\begin{equation}\label{value function}
V(t,y):=\min\limits_{u(\cdot)\in\,\mathcal{U}[0,T]} J(t,y;u(\cdot)),\quad (t,y)\in[0,T]\times\mathbb{R}.
\end{equation}

The following theorem gives the DPP and corresponding HJB equation of jump diffusions.
\begin{mythm}\label{DPP and HJB}
Let $\omega \in C_p^{1,2}([0,T]\times\mathbb{R})$, and satisfies:
\begin{equation}\label{HJB equation}
\begin{cases}
\min\limits_u \left\{ A^u\omega (t,y)+L(t,y,u) \right\} =0,\quad (t,y)\in[0,T]\times\mathbb{R},\\
\omega(T,y)=\Psi(y),\quad y\in\mathbb{R},
\end{cases}
\end{equation}
where
\begin{align}\label{integral-differential operator}
A^u\omega(t,y)&:=\frac{\partial \omega}{\partial t} (t,y)+\frac{\partial \omega}{\partial y}(t,y)[A_ty+B_tu+C_t]+\frac{1}{2} \frac{\partial\omega^2}{\partial y^2}(t,y)u^2\sigma^2_t\nonumber\\
&\quad +\int_{\mathbb{R}\setminus\left\{ 0 \right\} }\left[\omega(t,y+\eta(t,z))-\omega(t,y)-\eta(t,z)\frac{\partial \omega}{\partial y} (t,y)\right] \lambda(dz).
\end{align}
If there exists an admissible control $u^*(\cdot) $ such that
$$
u^*(\cdot) \in \argmin\limits_{u(\cdot)\in\,\mathcal{U}[0,T]} \left\{ A^u\omega (t,y(t))+L(t,y(t),u(t)) \right\},
$$
where $y^*(\cdot)$ is the solution to \eqref{linear SDE of jump diffusions} when $u(\cdot)=u^*(\cdot)$. Then $\omega(t,y)=V(t,y)$ and $u^*(\cdot)$ is an optimal control of the problem \eqref{value function}.
\end{mythm}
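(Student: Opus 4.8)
The plan is to run a standard verification argument built on the generalized It\^o formula of Lemma \ref{generalized Ito formula}. Fix $(t,y)\in[0,T)\times\mathbb{R}$ (the case $t=T$ being immediate from $\omega(T,\cdot)=\Psi$) and let $u(\cdot)\in\mathcal{U}[0,T]$ be an arbitrary admissible control with associated trajectory $y(\cdot)$ solving \eqref{linear SDE of jump diffusions} from $y(t)=y$. First I would apply Lemma \ref{generalized Ito formula} to $\omega(s,y(s))$ on $[t,T]$: collecting the $ds$-terms reproduces exactly the integro-differential operator $A^{u(s)}\omega(s,y(s))$ of \eqref{integral-differential operator}, while the remaining contributions are a $dB$-stochastic integral and a $\tilde N$-stochastic integral.

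The step I expect to be the main obstacle is justifying that these two stochastic integrals are true martingales, so that they vanish under $\mathbb{E}_{t,y}$. Since $\omega$ is only assumed to have polynomial growth of some degree $k$, and admissibility yields merely $\mathbb{E}\int_0^T u^2(s)\,ds<\infty$, the integrands need not be globally square-integrable. I would handle this by localization: introduce $\tau_n:=\inf\{s\ge t:|y(s)|\ge n\}\wedge T$, apply the It\^o formula on $[t,\tau_n]$ where the stochastic integrals are genuine martingales of zero mean, obtaining
\[
\mathbb{E}_{t,y}\big[\omega(\tau_n,y(\tau_n))\big]-\omega(t,y)=\mathbb{E}_{t,y}\int_t^{\tau_n}A^{u(s)}\omega(s,y(s))\,ds,
\]
and then let $n\to\infty$. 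The linear structure of \eqref{linear SDE of jump diffusions} (control entering affinely in drift and diffusion) gives the moment bound $\mathbb{E}_{t,y}\big[\sup_{t\le s\le T}|y(s)|^2\big]<\infty$; for the higher powers needed to dominate $\omega$, $L$, and $\Psi$ one additionally invokes the polynomial growth hypotheses together with finiteness of the value function, so that dominated convergence gives $\mathbb{E}_{t,y}[\omega(\tau_n,y(\tau_n))]\to\mathbb{E}_{t,y}[\Psi(y(T))]$ and the right-hand side converges correspondingly, yielding the identity $\mathbb{E}_{t,y}[\Psi(y(T))]-\omega(t,y)=\mathbb{E}_{t,y}\int_t^T A^{u(s)}\omega(s,y(s))\,ds$. (In the application of Section~2 one has $L\equiv0$ and $\Psi(y)=\tfrac12 y^2$, so $k=2$ and this reduction is clean.)

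Once this identity is in hand, the conclusion is short. From the HJB equation \eqref{HJB equation}, $A^u\omega(s,y)+L(s,y,u)\ge0$ for every $u$, hence $A^{u(s)}\omega(s,y(s))\ge -L(s,y(s),u(s))$; substituting into the identity gives
\[
\omega(t,y)\le\mathbb{E}_{t,y}\Big[\int_t^T L(s,y(s),u(s))\,ds+\Psi(y(T))\Big]=J(t,y;u(\cdot)),
\]
and taking the infimum over $u(\cdot)\in\mathcal{U}[0,T]$ yields $\omega(t,y)\le V(t,y)$. Finally, repeating the computation with $u=u^*(\cdot)$ and trajectory $y^*(\cdot)$: since $u^*(\cdot)$ attains the $\argmin$ in \eqref{HJB equation} along $y^*(\cdot)$, we have $A^{u^*(s)}\omega(s,y^*(s))+L(s,y^*(s),u^*(s))=0$ for a.e. $s$, so every inequality above becomes an equality and $\omega(t,y)=J(t,y;u^*(\cdot))\ge V(t,y)$. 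Combining the two bounds gives $\omega(t,y)=V(t,y)$ and shows $u^*(\cdot)$ is optimal for \eqref{value function}.
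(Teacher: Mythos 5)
Your proposal is correct and follows essentially the same verification argument as the paper: apply the generalized It\^o formula of Lemma \ref{generalized Ito formula}, take $\mathbb{E}_{t,y}$ to kill the stochastic integrals, use $A^u\omega+L\ge 0$ from \eqref{HJB equation} to get $\omega\le J$, and observe equality along $u^*(\cdot)$. The one difference is that you explicitly justify the vanishing of the $dB$- and $\tilde N$-integrals by localization and a passage to the limit, a step the paper simply takes for granted; this is a worthwhile refinement rather than a departure.
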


\begin{proof}
By Lemma \ref{generalized Ito formula}, we get
\begin{align*}
d\omega(t,y(t))&=\left[\frac{\partial \omega}{\partial t} (t,y(t))+\frac{\partial \omega}{\partial y} (t,y(t))[A_ty(t)+B_tu(t)+C_t]
 +\frac{1}{2} \frac{\partial^2 \omega}{\partial y^{2}}(t,y(t))u^2(t)\sigma^2_t\right]dt\\
&\quad +\frac{\partial \omega}{\partial y} (t,y(t))u(t)\sigma_tdB(t)
 +\int_{\mathbb{R}\setminus\left\{ 0 \right\} }\left[\omega(t,y(t-)+\eta(t,z))-\omega(t,y(t-))\right]\tilde{N}(dt,dz)\\
&\quad +\int_{\mathbb{R}\setminus\left\{ 0 \right\} }\left[\omega(t,y(t)+\eta(t,z))-\omega(t,y(t))-\eta(t,z)\frac{\partial \omega}{\partial y}(t,y(t))\right]\lambda(dz)dt.
\end{align*}
Thus by \eqref{integral-differential operator}, we have
\begin{align}\label{omega(t,y(t))}
\omega(t,y(t))&=\omega(T,y(T))-\int_t^TA^u\omega(s,y(s))ds -\int_{t}^{T}\frac{\partial \omega}{\partial y} (s,y(s))u(s)\sigma_sdB(s)\nonumber\\
&\quad -\int_t^T\int_{\mathbb{R}\setminus\left\{ 0 \right\} }[\omega(t,y(t-)+\eta(t,z))-\omega(t,y(t-))]\tilde{N}(dt,dz).
\end{align}
Taking $\mathbb{E}_{t,y}[\cdot]$ on both sides of \eqref{omega(t,y(t))}, we get
\begin{equation}
\mathbb{E}_{t,y}[\omega(T,y(T))]-\omega(t,y(t))=\mathbb{E}_{t,y}\left[\int_t^T A^u\omega(s,y(s))ds\right].
\end{equation}
From \eqref{HJB equation}, we know that for any $u \in \mathbb{R} $, we have
\begin{equation}
A^{u}\omega (t,y(t))+L(t,y(t),u)\ge 0.
\end{equation}
Thus we know that
\begin{align}
J(t,y;u)&=\mathbb{E}_{t,y} \left\{ \int_t^TL(s,y(s),u(s))+\Psi(y(T)) \right\},\nonumber\\
&\ge \mathbb{E}_{t,y} \left\{ -\int_t^TA^u\omega(s,y(s))ds+\omega(t,y(t))+ \int_t^TA^u\omega (s,y(s))ds \right\}=\omega(t,y(t)).
\end{align}
So
\begin{equation}\label{HJB inequality}
\min_{u}J(t,y;u)\ge \omega(t,y(t)).
\end{equation}
When $u=u^*(\cdot)$, \eqref{HJB inequality} becomes equality:
\begin{equation*}
V(t,y)=J(t,y^*(\cdot);u^*(\cdot))=\omega(t,y),
\end{equation*}
which means $u^*(\cdot)$ is an optimal control of the problem \eqref{value function}. So we have proved the result.
\end{proof}

Next, we use Theorem \ref{DPP and HJB} to our mean-variance portfolio selection of jump diffusions problem \eqref{MF portfolio selection}. Set $V(t,y)=\min\limits_u\mathbb{E}_{t,y}\left[\frac{1}{2}y^2(T)\right]$. Similar to \eqref{integral-differential operator}, we define an operator
\begin{align*}
A^uV(t,y)&:=\frac{\partial V}{\partial t} (t,y)+\frac{\partial V}{\partial y}(t,y)[y\rho_t+u(\mu_t-\rho_t)+\sqrt{\mu}\beta\rho_t]+\frac{1}{2} \frac{\partial V^2}{\partial y^2}(t,y)u^2\sigma^2_t\\
&\quad +\int_{\mathbb{R}\setminus\left\{ 0 \right\} }\left[V(t,y+\eta(t,z))-V(t,y)-\eta(t,z)\frac{\partial V}{\partial y} (t,y)\right] \lambda(dz).
\end{align*}
Then the HJB equation \eqref{HJB equation} now writes
\begin{equation}\label{HJB equation of mean-variance problem}
0=\min_{u}\left \{ A^{u}V(t,y) \right \},\quad (t,y)\in[0,T]\times\mathbb{R}.
\end{equation}
In order to find the solution of \eqref{HJB equation of mean-variance problem}, we set
\begin{equation}\label{value function of P, Q, R}
V(t,y)=\frac{1}{2}P(t)y^{2}+Q(t)y+R(t),\quad (t,y)\in[0,T]\times\mathbb{R},
\end{equation}
where $P(\cdot), Q(\cdot), R(\cdot)$ are differential functions with $P(T)=1, Q(T)=R(T)=0$. Substituting \eqref{value function of P, Q, R} into \eqref{HJB equation of mean-variance problem}, we have
\begin{align}\label{HJB equation of mean-variance problem by V(t,y)}
0&=\frac{1}{2}P'(t)y^2+Q'(t)y+R'(t) +\min_u\bigg\{ [y\rho_t+u(\mu_t-\rho_t)+\sqrt{\mu}\beta\rho_t](P(t)y+Q(t))\nonumber\\
&\quad + \frac{1}{2}P(t)u^2\sigma_t^2+ \frac{1}{2}P(t)u^2\int _{\mathbb{R}\setminus\left\{ 0 \right\} }\eta^2(t,z)\lambda(dz)\bigg\}.
\end{align}
Completing square for \eqref{HJB equation of mean-variance problem by V(t,y)}, we get
\begin{align}\label{completing square}
0&=\frac{1}{2}P(t)\min_u\left\{ \Lambda_t \left[u+\left(y+\frac{Q(t)}{P(t)}\right)\frac{D_t}{\Lambda_t}\right]^{2}  \right \}+\frac{1}{2} [P'(t)-(\theta(t)-2\rho_t)P(t)]y^2 \nonumber\\
&\quad +[Q'(t)-(\theta(t)-\rho_t)Q(t)+\sqrt{\mu}\beta\rho_tP(t)]y+R'(t)+\left[\sqrt{\mu}\beta\rho_t-\frac{1}{2}\theta(t)\frac{Q(t)}{P(t)}\right]Q(t),
\end{align}
where
\begin{equation*}
D_t:=\mu_t-\rho_t,\quad \Lambda_t:=\sigma^2_t+\int_{\mathbb{R}\setminus\left\{ 0 \right\} }\eta^2(t,z)\lambda(dz),\quad \theta(t):=\frac{D^2_t}{\Lambda_t}.
\end{equation*}
From \eqref{completing square}, we know that if and only if the following three ODEs
\begin{equation}\label{ODE P}
\begin{cases}
P'(t)-(\theta(t)-2\rho_t)P(t) =0,\\
P(T)=1,
\end{cases}
\end{equation}
\begin{equation}\label{ODE Q}
\begin{cases}
Q'(t)-(\theta(t)-\rho_t)Q(t)+\sqrt{\mu}\beta\rho_tP(t)=0,\\
Q(T)=0,
\end{cases}
\end{equation}
\begin{equation}\label{ODE R}
\begin{cases}
R'(t)+\left[\sqrt{\mu}\beta\rho_t-\frac{1}{2}\theta(t)\frac{Q(t)}{P(t)}\right]Q(t)=0,\\
R(T)=0,
\end{cases}
\end{equation}
admit solutions, problem \eqref{MF portfolio selection} has a unique optimal control
\begin{equation}\label{optimal control of DPP}
u^*(t)=-\left(y^*(t)+\frac{Q(t)}{P(t)}\right)\frac{D_t}{\Lambda_t},\quad \text{a.e.}t\in[0,T],
\end{equation}
where $y^*(\cdot)$ is the solution to \eqref{state SDE} corresponding to $u^*(\cdot)$.

Notice that \eqref{ODE P}, \eqref{ODE Q}, \eqref{ODE R} are all first-order ODEs, we can find out their unique solutions:
\begin{equation}\label{P}
P(t)=e^{\int_t^T(\theta(s)-2\rho_s)ds},
\end{equation}
\begin{equation}\label{Q}
Q(t)=\sqrt{\mu}\beta e^{-\int_t^T (\rho_s-\theta(s))ds}\left(e^{-\int_t^T \rho_sds}-1\right).
\end{equation}
Substituting \eqref{P} and \eqref{Q} into \eqref{optimal control of DPP}, we get
\begin{equation}\label{optimal control of DPP-state}
\hat{u}(t)=-\left[y^*(t)+\sqrt{\mu}\beta(1-e^{-\int_t^T \rho_sds})\right]\frac{\mu_t-\rho_t}{\Lambda_t},\quad \text{a.e.}t\in[0,T].
\end{equation}
It stands to reason that this expression is the same as \eqref{optimal control of MP-state}.

We summarize the above process into the following result.
\begin{mythm}
By the DPP approach, the state/wealth feedback form's optimal control $\hat{u}(\cdot)$ for our mean-variance portfolio selection of jump diffusions problem \eqref{MF portfolio selection} is given by \eqref{optimal control of DPP-state}.
\end{mythm}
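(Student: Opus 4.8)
\quad The statement is an immediate consequence of the verification theorem (Theorem~\ref{DPP and HJB}) once the quadratic-ansatz computation preceding it is made rigorous, so the plan is to assemble that computation into a proof. First I would take the candidate value function $V(t,y)=\tfrac12 P(t)y^2+Q(t)y+R(t)$, where $P,Q,R$ are the unique solutions of the linear first-order ODEs \eqref{ODE P}, \eqref{ODE Q}, \eqref{ODE R}. Since those ODEs have continuous coefficients on $[0,T]$, we have $P,Q,R\in C^1([0,T])$, hence $V\in C_p^{1,2}([0,T]\times\mathbb{R})$ because $V$ is a polynomial in $y$ with continuous time-dependent coefficients and is therefore of polynomial growth; this matches the hypotheses of Theorem~\ref{DPP and HJB} with $L\equiv 0$ and $\Psi(y)=\tfrac12 y^2$.

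Next I would substitute $V$ into the operator $A^uV$. The crucial simplification is that for a quadratic $V$ the jump integrand telescopes exactly: $V(t,y+u\eta(t,z))-V(t,y)-u\eta(t,z)V_y(t,y)=\tfrac12 P(t)u^2\eta^2(t,z)$, so integrating against $\lambda(dz)$ and adding the Brownian Hessian term $\tfrac12 P(t)u^2\sigma_t^2$ produces $\tfrac12 P(t)u^2\Lambda_t$, which yields exactly \eqref{HJB equation of mean-variance problem by V(t,y)}. Because $P(t)=e^{\int_t^T(\theta(s)-2\rho_s)ds}>0$, the bracket under $\min_u$ is a strictly convex quadratic in $u$, so completing the square gives \eqref{completing square} together with the unique minimizer $u^*(t)=-\big(y^*(t)+Q(t)/P(t)\big)D_t/\Lambda_t$. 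Forcing the remaining $y^2$-, $y$- and constant-coefficients in \eqref{completing square} to vanish recovers precisely \eqref{ODE P}, \eqref{ODE Q}, \eqref{ODE R}; solving these by the integrating-factor method gives the explicit forms \eqref{P} and \eqref{Q} (the function $R$ does not enter the control), and inserting \eqref{P}, \eqref{Q} into $u^*$ produces exactly \eqref{optimal control of DPP-state}, which coincides with the MP expression \eqref{optimal control of MP-state}.

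Finally, to legitimately invoke Theorem~\ref{DPP and HJB} I must check that $u^*(\cdot)\in\mathcal{U}[0,T]$, and this is the only genuinely technical point. Here $u^*$ is affine in $y^*$ with bounded deterministic coefficients (using boundedness of $\rho_\cdot,\mu_\cdot,\sigma_\cdot$ and that $\Lambda_t$ is bounded and bounded away from zero on $[0,T]$), so the closed-loop equation \eqref{state SDE} becomes a linear SDE of jump-diffusion type with globally Lipschitz, linearly growing coefficients; standard $L^2$ moment estimates then give $\mathbb{E}\sup_{0\le t\le T}|y^*(t)|^2<\infty$, whence $\mathbb{E}\int_0^T (u^*(t))^2\,dt<\infty$ and $u^*(\cdot)$ is admissible. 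Theorem~\ref{DPP and HJB} then yields $\omega=V$ and the optimality of $u^*(\cdot)$; since $u^*=\hat u$, the theorem follows. I expect this admissibility/well-posedness verification — rather than the completion of the square or the routine integration of the linear ODEs — to be the main (and essentially only) obstacle.
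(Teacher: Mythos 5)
Your proposal follows essentially the same route as the paper: the quadratic ansatz \eqref{value function of P, Q, R}, substitution into the HJB equation, completion of the square yielding the ODEs \eqref{ODE P}--\eqref{ODE R}, and invocation of the verification theorem (Theorem~\ref{DPP and HJB}) to conclude optimality of \eqref{optimal control of DPP-state}. The only additions are your explicit checks that $V\in C_p^{1,2}$ and that the closed-loop control is admissible (via $L^2$ estimates for the resulting linear jump-diffusion SDE), which the paper leaves implicit; these are correct and tighten the argument without changing its substance.
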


Moreover, comparing \eqref{phi solution} and \eqref{P}, we discover
\begin{equation}\label{phi and P}
\phi_{t}=-P(t),
\end{equation}
and comparing \eqref{psi solution} and \eqref{Q}, we find
\begin{equation}\label{psi and Q}
\psi_{t}=-Q(t),
\end{equation}
which inspire us to seek out the relationship between MP and DPP.

\section{Relationship between MP and DPP}

After the calculations in section 3 and section 4, we can verify the following result directly, showing the relationship between MP and DPP for our mean-variance portfolio selection of jump diffusions problem.
\begin{mythm}
{\bf (Relationship between MP and DPP)}\quad For problem \eqref{MF portfolio selection}, let $u^*(\cdot)$ be an optimal control and $y^*(\cdot)$ is the corresponding optimal state/wealth trajectory satisfying \eqref{state SDE}, then the following results hold:
\begin{equation}\label{relation between MP and DPP}
\begin{cases}
p(t)=-\frac{\partial V}{\partial y}(t,y^*(t)),\\
q(t)=-\frac{\partial^2 V}{\partial y^2}(t,y^*(t))\sigma_tu^*(t),\\
-r(t,z)=\frac{\partial V}{\partial y}\big(t,y^*(t)+u^*(t)\eta(t,z)\big)-\frac{\partial V}{\partial y}(t,y^*(t)),\quad \text{a.e.}t\in[0,T],\quad \mathbb{P}\text{-a.s.},
\end{cases}
\end{equation}
where $(p(\cdot),q(\cdot),r(\cdot,\cdot))$ satisfies the adjoint equation \eqref{adjoint}, and $V(\cdot,\cdot)$ is the value function.
\end{mythm}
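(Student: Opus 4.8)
The plan is to establish the three identities by direct verification, taking advantage of the fact that in Sections~3 and~4 every quantity involved has already been computed in closed form; so, unlike the general MP--DPP relationship which needs viscosity-solution arguments, here the explicitly smooth value function makes a hands-on check possible. The ingredients I will combine are: the ansatz $p(t)=\phi_t y^*(t)+\psi_t$ from \eqref{relation of p, y} together with the feedback representations $q(t)=\phi_t\sigma_t u^*(t)$ and $r(t,z)=\phi_t u^*(t)\eta(t,z)$ from \eqref{compare}; the quadratic form $V(t,y)=\frac{1}{2}P(t)y^{2}+Q(t)y+R(t)$ from \eqref{value function of P, Q, R}, which is legitimate because the coefficients are deterministic and continuous, so that \eqref{ODE P}, \eqref{ODE Q}, \eqref{ODE R} possess the explicit solutions \eqref{P}, \eqref{Q} and a primitive for $R$, whence $V\in C_p^{1,2}([0,T]\times\mathbb{R})$ and Theorem~\ref{DPP and HJB} identifies it with the value function; and finally the matchings $\phi_t=-P(t)$ and $\psi_t=-Q(t)$ recorded in \eqref{phi and P} and \eqref{psi and Q}. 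I will also use that the maximum principle and dynamic programming approaches produce the same optimal pair, so $\hat u=u^*$ and $\hat y=y^*$.

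First I would differentiate $V$, obtaining $\frac{\partial V}{\partial y}(t,y)=P(t)y+Q(t)$ and $\frac{\partial^2 V}{\partial y^2}(t,y)=P(t)$. Evaluating the first at $y=y^*(t)$ and using $\phi_t=-P(t)$, $\psi_t=-Q(t)$ gives $\frac{\partial V}{\partial y}(t,y^*(t))=P(t)y^*(t)+Q(t)=-\big(\phi_t y^*(t)+\psi_t\big)=-p(t)$, which is the first line of \eqref{relation between MP and DPP}. For the second line, the feedback form of $q$ in \eqref{compare} yields $q(t)=\phi_t\sigma_t u^*(t)=-P(t)\sigma_t u^*(t)=-\frac{\partial^2 V}{\partial y^2}(t,y^*(t))\,\sigma_t u^*(t)$. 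For the third line, the feedback form of $r$ gives $r(t,z)=\phi_t u^*(t)\eta(t,z)=-P(t)u^*(t)\eta(t,z)$, while $\frac{\partial V}{\partial y}\big(t,y^*(t)+u^*(t)\eta(t,z)\big)-\frac{\partial V}{\partial y}(t,y^*(t))=P(t)u^*(t)\eta(t,z)$; comparing the two shows that $-r(t,z)$ equals this difference.

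The one point that requires more than bookkeeping is ensuring that these identities refer to the genuine maximum principle triple rather than merely to the particular representation used to solve the adjoint equation. Equation \eqref{adjoint} is a linear backward SDE driven by $B(\cdot)$ and $\tilde N(\cdot,\cdot)$, hence admits a unique adapted solution $(p(\cdot),q(\cdot),r(\cdot,\cdot))$; since the triple built from \eqref{relation of p, y} and \eqref{compare} with the explicit $\phi,\psi$ of \eqref{phi solution} and \eqref{psi solution} solves \eqref{adjoint}, it must coincide with that unique solution, so the equalities above genuinely concern the adjoint processes produced by the maximum principle. I expect this uniqueness observation, together with invoking Theorem~\ref{DPP and HJB} to know that the quadratic $V$ is indeed the value function, to be the only non-mechanical steps; the remainder of the argument is exactly the chain of equalities in the preceding paragraph and occupies only a few lines.
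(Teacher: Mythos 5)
Your proposal is correct and is essentially the paper's own argument: the paper's proof simply cites \eqref{relation of p, y}, \eqref{compare}, \eqref{value function of P, Q, R}, \eqref{phi and P} and \eqref{psi and Q} and omits the details, which are exactly the chain of substitutions you carry out. Your added remark on uniqueness of the adapted solution to the linear adjoint BSDE is a sensible (and correct) way to justify that the verified triple is the genuine adjoint triple, but it does not change the route.
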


\begin{proof} It is direct by \eqref{relation of p, y}, \eqref{compare}, \eqref{value function of P, Q, R}, \eqref{phi and P} and \eqref{psi and Q}. We omit the detail.
\end{proof}

\section{Conclusions}

In this paper, we have studied the maximum principle and dynamic programming principle for mean-variance portfolio selection of jump diffusions and their relationship. First, the optimal portfolio and efficient frontier of the problem are obtained using both methods. Furthermore, the relationship between these two methods is investigated. Specially, the connections between the adjoint processes and value function are given.

In the future, we will consider the problems with recursive utilities (see \cite{Shi14}, \cite{LWZ15}, \cite{SGZ18}, \cite{Li23}, \cite{WS24}).

\end{document}